\documentclass{IEEEtran}

\usepackage{cite}
\usepackage{amsmath,amssymb,amsfonts,amsthm}
\usepackage{graphicx}
\usepackage{hyperref}
\hypersetup{hidelinks}
\usepackage{textcomp}

\usepackage{enumerate}
\usepackage{xcolor}
\usepackage{flushend}

\graphicspath{{Figs/}}

\definecolor{myred}{rgb}{0.7,0.1,0.16}
\definecolor{myblue}{rgb}{0,0.32,0.7}
\definecolor{mygreen}{rgb}{0.133,0.545,0.133}

\usepackage{tikz}
\usetikzlibrary{arrows}

\newtheorem{theorem}{Theorem}
\newtheorem{remark}{Remark}

\newtheorem{definition}{Definition}
\newtheorem{corollary}{Corollary}

\def\BibTeX{{\rm B\kern-.05em{\sc i\kern-.025em b}\kern-.08em
    T\kern-.1667em\lower.7ex\hbox{E}\kern-.125emX}}

\begin{document}
\title{Universal Formula Families for Safe Stabilization of Single-Input Nonlinear Systems}
\author{Bo Wang and Miroslav Krsti{\'c}
\thanks{This work was partially supported by the PSC-CUNY Research Award from The City University of New York. \textit{(Corresponding author: Bo Wang.)}}
\thanks{Bo Wang is with the Department of Mechanical Engineering, The City College of New York, The City University of New York, New York, NY 10031, USA (e-mail: bwang1@ccny.cuny.edu). }
\thanks{Miroslav Krsti{\'c} is with the Department of Mechanical and Aerospace Engineering, University of California San Diego, La Jolla, CA 92093, USA (e-mail: mkrstic@ucsd.edu). }
}

\maketitle
\begin{abstract}
We develop an optimization-free framework for safe stabilization of single-input control-affine nonlinear systems with a given control Lyapunov function (CLF) and a given control barrier function (CBF), where the desired equilibrium lies in the interior of the safe set. An explicit compatibility condition is derived that is necessary and sufficient for the pointwise simultaneous satisfaction of the CLF and CBF inequalities. When this condition holds, two closed-form continuous state-feedback laws are constructed from the Lie-derivative data of the CLF and CBF via standard universal stabilizer formulas, yielding asymptotic stabilization of the origin and forward invariance of the interior of the safe set, without online quadratic programming. The two laws belong to broader families parametrized by a free nondecreasing function, providing additional design flexibility. When the compatibility condition fails, a safety-prioritizing modification preserves forward invariance and drives the state toward the safe-set boundary until a compatible region is reached, whereupon continuity at the origin and asymptotic stabilization are recovered. The framework produces families of explicit constructive 
alternatives to CLF-CBF quadratic programming for scalar-input 
nonlinear systems.

\end{abstract}

\section{Introduction} \label{sec:introduction}

Control barrier function (CBF)-based techniques have proven effective for enforcing safety constraints \cite{ames2014control,ames2017control}, with applications spanning walking robots \cite{ames2019control}, automotive systems \cite{xu2018correctness,han2024safety,wang2026further}, and multi-agent systems \cite{jankovic2024multiagent}, among others.

\paragraph*{Safe Stabilization as Compatibility, Not Optimization}
Quadratic programming (QP) has become the dominant mechanism for combining control Lyapunov functions (CLFs) and CBFs, yielding CLF-CBF-based QP \cite{ames2014control,xu2015robustness,ames2017control,ames2019control}, CBF-only safety filters \cite{xu2015robustness,gurriet2018towards,ames2019control,singletary2021safety,krstic2023inverse}, and $\gamma m$-CLF-CBF QP approaches \cite{jankovic2018robust,wang2026further}---all extensions of Freeman and Kokotovi\'c's pointwise minimum-norm (PMN) controller \cite{freeman1996robust}. However, these optimization-based approaches can degrade or lose stabilization, introduce undesirable equilibria, and compromise boundedness \cite{saberi2002constrained,grover2020does,reis2021control,jankovic2018robust}, particularly when slack variables are introduced. Yet the central issue is not optimization-theoretic: at each state, the CLF and CBF conditions impose affine inequalities in the control input, which in the single-input case reduce to just two scalar inequalities.

\paragraph*{Scalar Structure: Two Inequalities, One Question}
Safe stabilization is therefore fundamentally a compatibility question: does there exist a continuous state-feedback law that simultaneously satisfies both inequalities at every state?
This is conceptually distinct from the feasibility of a particular QP formulation. Optimization-based controllers provide one mechanism for enforcing inequalities, but they do not isolate the structural conditions under which simultaneous safety and asymptotic stabilization are achievable---so issues such as loss of stabilization under safety filtering tend to be analyzed within specific QP architectures rather than at the level of existence of safe stabilizing feedback laws. The present work addresses this foundational question directly.

\paragraph*{Universal Formulas} Artstein's theorem establishes that CLF existence is equivalent to stabilizability via continuous state feedback, but its proof is nonconstructive \cite{artstein1983stabilization}. This motivated Sontag \cite{sontag1989universal} to derive an explicit formula from Lie-derivative data, later extended to bounded and positive controls \cite{lin1991universal,lin1995control}. Freeman and Kokotovi\'c's PMN controller \cite{freeman1996robust} additionally yields favorable stability margins and inverse optimality \cite{freeman1996inverse}---properties extended to safety filters in \cite{krstic2023inverse}. These universal formulas underpin a broader program of constructive stabilization \cite{wu2005simultaneous}.

Ong and Cortés \cite{ong2019universal} extended this program to safe stabilization via weighted centroids of the admissible control set. While their construction applies to general admissible control sets, the resulting feedback requires two integrals over that set, obscuring analytical structure. Li and Sun \cite{li2023graphical} proposed an alternative via PMN-QP. The present paper instead derives families of closed-form 
safe-stabilizing feedback laws directly from the CLF-CBF 
inequalities, parametrized by a free nondecreasing function. Figure~\ref{fig:framework} illustrates the simplest realization, under the necessary and sufficient condition, of the proposed blending architecture, in which a CLF universal stabilizer and a CBF universal safe formula are coordinated to simultaneously achieve stabilization and safety.

\paragraph*{Relation to CLF-CBF and Control-Sharing Results}
Xu \cite{xu2018constrained} characterized control sharing for multiple CBFs in single-input systems via necessary and sufficient conditions for a common input; Cohen et al.\ \cite{cohen2025compatibility} extended analogous compatibility conditions to multi-input systems. Both works concern compatibility among barrier constraints. Mestres et al.\ \cite{mestres2025neural} addressed multiple inequality constraints through a strictly convex minimizer approximated by a neural network. The present paper instead targets the scalar CLF-CBF setting and delivers explicit compatibility conditions together with closed-form continuous feedback laws, without neural-network approximation or online optimization.

\paragraph*{Contributions --- Explicit Compatibility and Closed-Form Safe Stabilization}
This paper considers the safe-stabilization problem for single-input nonlinear systems with a given CLF and CBF, where the desired equilibrium lies in the interior of the safe set. We adopt a strict barrier definition, precluding the state from residing on the boundary. For the scalar-input case, we derive an explicit compatibility condition for the pointwise simultaneous satisfaction of the CLF and CBF inequalities and use it to construct continuous safe-stabilizing feedback laws in closed form. When the compatibility condition holds, the proposed controllers guarantee asymptotic stabilization of the origin together with forward invariance of the interior of the safe set. When it fails, we develop safety-prioritizing continuous feedback laws that preserve forward invariance and recover continuity and asymptotic stabilization near the origin. The paper thereby provides an explicit, constructive, and optimization-free alternative to CLF-CBF-based QP methods for scalar-input nonlinear systems. In summary, the paper contributes the following: 

\begin{enumerate}
    \item[(i)] We identify an explicit compatibility condition that is necessary and sufficient for the pointwise simultaneous satisfaction of the CLF and CBF inequalities, precisely characterizing when safety and stabilization can be achieved simultaneously in the single-input case.
    \item[(ii)] We construct families of explicit continuous state-feedback 
laws in closed form, parametrized by a free nondecreasing 
function, applying standard universal stabilizer formulas 
directly to the Lie-derivative data of the CLF and CBF.
    \item[(iii)] When the compatibility condition fails, we develop a safety-prioritizing continuous feedback law that preserves forward invariance of the safe set, steers trajectories toward the safe-set boundary in incompatible regions, and recovers continuity at the origin and asymptotic stabilization under the standing assumptions.
\end{enumerate}

\begin{figure}
    \centering
    \includegraphics[width=0.8\linewidth]{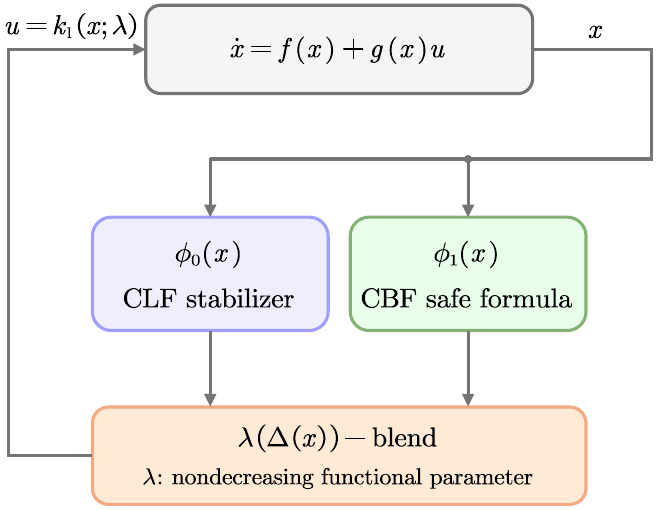}
    \caption{Simplest realization of the proposed architecture, under the necessary and sufficient condition, for simultaneously achieving stabilization and safety.}
    \label{fig:framework}
\end{figure}

\paragraph*{Comparison with Prior Work} Compared with the modified-QP designs \cite{jankovic2018robust}, we establish a sharper construction, for a compatible CLF-CBF pair, yielding families of feedbacks built directly from Lie-derivative data and standard universal formulas. Unlike \cite{ong2019universal}, which relies on integral expressions over the admissible control set, our feedback laws are explicit and closed-form. Unlike the control-sharing framework \cite{xu2018constrained}, which concerns compatibility among multiple CBF constraints within a QP-based tracking formulation, the present paper studies compatibility between a CLF inequality and a CBF inequality in the scalar-input setting and yields explicit continuous safe-stabilizing feedback laws.

Relative to \cite{li2023graphical}, which handles CLF-CBF incompatibility by relaxing the CLF inequality with a slack variable within a PMN-QP framework, this paper develops safety-prioritizing closed-form feedbacks for incompatible regions, preserving forward invariance and recovering continuity and asymptotic stabilization without slack variables or relaxation weights. Relative to \cite{mestres2025neural}, which defines the controller as the unique minimizer of a strictly convex objective approximated by a neural network, the distinction is both conceptual and computational. Both approaches operate over the same feasible interval induced by the CLF and CBF inequalities, but the controller in \cite{mestres2025neural} requires solving a quartic equation or neural-network approximation whose guarantees are restricted to compact subsets of the parameter domain, limiting stabilization to semiglobal results. Our approach exploits the one-dimensional geometry to obtain explicit closed-form feedback laws that enforce feasibility under a clear compatibility condition.

More broadly, this paper is a conceptual extension of constrained  stabilization  arising in nonovershooting control \cite{krstic2006nonovershooting}; see also \cite[Sect.~XI]{krstic2023inverse}. In that setting, safety and stabilization were unified under state constraints with the equilibrium  on the boundary of the admissible set. The present work revisits this agenda for general single-input control-affine nonlinear systems with a CLF and CBF, studying when safety and asymptotic stabilization can be achieved simultaneously. 
The treatment of disturbances, via ISS-CLFs or robust CBF formulations as in \cite{jankovic2018robust}, is beyond the scope of this note.

\paragraph*{Organization} Section \ref{sec:preliminaries} presents preliminaries and the problem statement. Section \ref{sec:main} develops the main results on compatibility conditions and universal formulas for safe stabilization of single-input nonlinear systems. Section \ref{sec:simulation} provides simulation examples. Section \ref{sec:conclusion} provides concluding remarks.

\section{Preliminaries and Problem Statement}\label{sec:preliminaries}

\textit{Notation}: Let $|\cdot|$ denote the Euclidean norm on $\mathbb{R}^n$. For $S\subset\mathbb{R}^n$, $\partial S$ and $\operatorname{Int}(S)$ denote its boundary and interior. $\mathcal{K}$ is the class of continuous, strictly increasing functions $\alpha:\mathbb{R}_{\ge0}\to\mathbb{R}_{\ge0}$ with $\alpha(0)=0$; $\mathcal{K}_\infty\subset\mathcal{K}$ are those unbounded. A function $\alpha_h:\mathbb{R}\to\mathbb{R}$ is in $\mathcal{K}_\infty^e$ if it is strictly increasing, $\alpha_h(0)=0$, and $\alpha_h(s)\to\pm\infty$ as $s\to\pm\infty$. Arguments are omitted when clear.

Consider the single-input control-affine nonlinear system
\begin{equation}\label{eq:nls}
    \dot{x}=f(x)+g(x)u,
\end{equation}
where $x\in\mathbb{R}^n$ is the state, $u\in\mathbb{R}$ is the control input, and $f:\mathbb{R}^n\to \mathbb{R}^n$ and $g:\mathbb{R}^n\to \mathbb{R}^n$ are known smooth vector fields. We assume $f(0)=0$ so that the origin is an equilibrium of the unforced system. The following definitions are standard.

\begin{definition}[CLF]\rm A positive definite, proper, continuously
differentiable function $V:\mathcal{D}\to \mathbb{R}_{\ge 0}$ is a (local) \textit{CLF} for system \eqref{eq:nls} on $\mathcal{D}\subset\mathbb{R}^n$ if there exists a function $\alpha\in\mathcal{K}$ such that for all $x\in\mathcal{D}\backslash \{0\}$
\begin{equation}\label{eq:defCLF}
    L_gV(x)=0 \implies L_fV(x) + \alpha(|x|)<0.
\end{equation}
If $\mathcal{D}=\mathbb{R}^n$, then $V$ is a global CLF for \eqref{eq:nls}.
\end{definition}

\begin{definition}[SCP]\rm A CLF $V$ for system \eqref{eq:nls} is said to have the \textit{small control property (SCP)} if for each $\varepsilon>0$ there is a $\delta>0$ such that, if $x\ne 0$ satisfies $|x|\le \delta$, then there exists $u$ with $|u|\le \varepsilon$ such that 
\begin{equation}
    L_fV(x)+L_gV(x) u <0.
\end{equation}
\end{definition}

Suppose there exists a state feedback controller for \eqref{eq:nls} such that the origin of the closed-loop system is asymptotically stable, then by the converse Lyapunov theorem, there exists a CLF \cite{KHALIL2002}. Furthermore, if there exists a stabilizer continuous at the origin, then a CLF exists with SCP \cite{sontag1989universal}.

Let $V$ be a (global) CLF for system \eqref{eq:nls} and let us denote
\begin{equation*}
    a_0(x):=L_fV(x)+\alpha(|x|),\quad b_0(x):=L_gV(x).
\end{equation*}
Various universal formulas for stabilization have been proposed in the literature, e.g., \cite{sontag1989universal,freeman1996robust}. It is shown that if $V$ satisfies the SCP, then the functions $\phi_{\rm S}$ and $\phi_{\rm F}$ below are continuous and give global
asymptotic stabilizers:
\begin{equation}
    \phi_{\rm S}(a_0(x),b_0(x)):=\left\{ \begin{array}{ll}
        -\frac{a_0(x)+\sqrt{a_0^2(x)+b_0^4(x)}}{b_0(x)} & \text{if}~b_0(x)\ne 0, \\
         0& \text{if}~b_0(x)= 0,
    \end{array}  \right.
\end{equation}
\begin{equation}
    \phi_{\rm F}(a_0(x),b_0(x)):=\left\{ \begin{array}{ll}
        -\frac{\max\{a_0(x)+b_0^2(x),0\}}{b_0(x)}, & \text{if}~b_0(x)\ne 0, \\
         0& \text{if}~b_0(x)= 0.
    \end{array}  \right.
\end{equation}
Note that $\phi_{\rm S}(a_0(x),b_0(x))$ is also smooth on $\mathbb{R}^n\backslash \{0\}$.

The property of safety is formalized by requiring the trajectories of the closed-loop system to remain within a safe set at all times. We assume that the safe set $\mathcal{C}$ is built as the 0-superlevel set of a smooth function $h:\mathbb{R}^n\to \mathbb{R}$, i.e., 
\begin{equation*}
    \mathcal{C}:=\{x\in\mathbb{R}^n:h(x)\ge 0\}.
\end{equation*}

\vspace{0cm}
\begin{definition}[CBF] \rm
A continuously differentiable function $h:\mathbb{R}^n\to \mathbb{R}$ is a \textit{CBF} with respect to the set $\mathcal{C}\subset\mathbb{R}^n$ if there exists $\alpha_h\in\mathcal{K}^e_\infty$ such that for all $x\in\mathbb{R}^n$
\begin{equation}\label{eq:defCBF}
    L_gh(x)=0 \implies L_fh(x) + \alpha_h(h(x))>0.
\end{equation}
\end{definition}

Suppose that the safe set $\mathcal{C}$ is compact, and there exists a state feedback controller such that $\mathcal{C}$ is safe for \eqref{eq:nls}, then by the converse theorem \cite[Theorem 3]{ames2019control}, $h|_\mathcal{C}:\mathcal{C}\to\mathbb{R}$ is a CBF. More converse results on CBFs can be found in \cite{mestres2024converse}.

Let $h$ be a CBF for \eqref{eq:nls} with respect to $\mathcal{C}$ and let us denote
\begin{equation*}
    a_1(x):=-L_fh(x)-\alpha_h(h(x)),\quad b_1(x):=-L_gh(x).
\end{equation*}
Note that the existence of a global CLF $V$ for \eqref{eq:nls} implies that for each $x\in\mathbb{R}^n\backslash\{0\}$ there exists $u\in\mathbb{R}$ enforcing the inequality
\begin{equation}\label{eq:clf}
    F_0:= a_0(x)+b_0(x)u<0.
\end{equation}
Similarly, the existence of a CBF $h$ implies that for each $x\in\mathbb{R}^n$ there exists $u\in\mathbb{R}$ that enforces the inequality
\begin{equation}\label{eq:cbf}
    F_1:=a_1(x)+b_1(x)u<0.
\end{equation}
However, the existence of CLF $V$ and CBF $h$ for \eqref{eq:nls} does not guarantee the existence of $u\in\mathbb{R}$ simultaneously enforcing \eqref{eq:clf} and \eqref{eq:cbf}.

\textit{Problem Statement:} Seek, first, a necessary condition for the existence of a control input $u\in\mathbb{R}$ satisfying \eqref{eq:clf} and \eqref{eq:cbf}. Then, construct a continuous feedback $k:\mathbb{R}^n\to\mathbb{R}$ such that, when the condition holds, 
\begin{equation}
u = k(x)
\end{equation}
ensures asymptotic stability of the origin and forward invariance of the interior of the safe set $\mathcal{C}$, and,  when the condition fails, safety of $\mathcal{C}$ is preserved.

\section{Main Results}\label{sec:main}

\subsection{A Necessary Condition for Safe Stabilization}

For each $x\in\mathbb{R}^n$, $b_0(x)$ and $b_1(x)$ are scalars. If $b_0(x)$ and $b_1(x)$ have the same sign, then it is always possible to find $u\in\mathbb{R}$ such that \eqref{eq:clf}-\eqref{eq:cbf} hold simultaneously. However, if $b_0(x)$ and $b_1(x)$ have opposite signs, then $u$ exerts opposite effects on $F_0$ and $F_1$, and \eqref{eq:clf}-\eqref{eq:cbf} may not be satisfied simultaneously.

The following theorem presents a necessary condition for the existence of safe stabilizing state-feedback controllers for the single-input nonlinear system \eqref{eq:nls}.

\begin{theorem}\label{thm:1}
    Consider the system~\eqref{eq:nls} and assume that there exist a CLF $V$ and a CBF $h$.    
    If there exists a function $k:\mathbb{R}^n \to \mathbb{R}$ such that, with $u = k(x)$, \eqref{eq:clf}-\eqref{eq:cbf} hold simultaneously for all $x\in\mathbb{R}^n\backslash\{0\}$, then for each $x$ such that $b_0(x)b_1(x)<0$, defining $i$ as the index with $b_i(x)>0$ and $j$ as the index with $b_j(x)<0$, it follows that
    \begin{equation}\label{eq:10}
        -\frac{a_j(x)}{b_j(x)}<-\frac{a_i(x)}{b_i(x)}.
    \end{equation}
\end{theorem}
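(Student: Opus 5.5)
The argument is pointwise and elementary: fix an arbitrary $x\neq 0$ with $b_0(x)b_1(x)<0$, and let $i,j$ be as in the statement, so $b_i(x)>0$ and $b_j(x)<0$. By hypothesis the scalar $u:=k(x)$ satisfies both \eqref{eq:clf} and \eqref{eq:cbf}, that is, $a_i(x)+b_i(x)u<0$ and $a_j(x)+b_j(x)u<0$. Continuity or any regularity of $k$ plays no role; only the existence of this one number $u$ at the given state is used.

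The key step is to divide each inequality by its coefficient and track the sign. Since $b_i(x)>0$, dividing $a_i(x)+b_i(x)u<0$ by $b_i(x)$ preserves the inequality and gives
\begin{equation*}
u<-\frac{a_i(x)}{b_i(x)}.
\end{equation*}
Since $b_j(x)<0$, dividing $a_j(x)+b_j(x)u<0$ by $b_j(x)$ reverses the inequality and gives
\begin{equation*}
u>-\frac{a_j(x)}{b_j(x)}.
\end{equation*}
Chaining the two bounds yields $-a_j(x)/b_j(x)<u<-a_i(x)/b_i(x)$, and in particular \eqref{eq:10}. Geometrically, the admissible sets for the two constraints are the open half-lines $(-a_j/b_j,\infty)$ and $(-\infty,-a_i/b_i)$, and \eqref{eq:10} says exactly that they intersect.

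The only point that needs care is the domain. The statement says ``for each $x$'', but the hypothesis only constrains $k$ on $\mathbb{R}^n\backslash\{0\}$. At $x=0$, however, $V$ is positive definite and $C^1$, so $\nabla V(0)=0$ and hence $b_0(0)=L_gV(0)=0$. Therefore $b_0(0)b_1(0)<0$ cannot hold, and the origin is automatically excluded from the set of states where \eqref{eq:10} is claimed. With that observation the proof is immediate, and I expect no genuine obstacle; the main thing to get right is the sign reversal when dividing by the negative coefficient $b_j(x)$.
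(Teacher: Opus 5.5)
Your proof is correct and is essentially the paper's: divide the CLF and CBF inequalities at $u=k(x)$ by $b_i(x)>0$ and $b_j(x)<0$ respectively, then chain $-\frac{a_j}{b_j}<k(x)<-\frac{a_i}{b_i}$. Your extra observation that $b_0(0)=L_gV(0)=0$ is valid: $\nabla V(0)=0$ because $V$ is $C^1$ and positive definite. It justifies excluding the origin, a point the paper leaves implicit.
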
 
    
\begin{proof}
    Since $b_i(x)>0$ and $b_j(x)<0$, it follows from inequalities \eqref{eq:clf}-\eqref{eq:cbf} that
    \begin{equation}\label{eq:11}
        k(x)<-\frac{a_i(x)}{b_i(x)}\quad \text{and} \quad k(x)>-\frac{a_j(x)}{b_j(x)}.
    \end{equation}
   The existence of $k(x)$ satisfying \eqref{eq:11} implies \eqref{eq:10}, which completes the proof.   
\end{proof}

\subsection{Universal Formulas for Safe Stabilization}

Although Theorem \ref{thm:1} is simple in form, remarkably, in the single-input case, the necessary condition provided in Theorem \ref{thm:1} is also sufficient for the existence of a safe stabilizing control law 
$u=k(x)$ such that \eqref{eq:clf}-\eqref{eq:cbf} hold simultaneously.

Let us denote the Sontag-type and Freeman-type safety control laws, respectively, as
\begin{equation}
    \phi_{\rm S}(a_1(x),b_1(x)):=\left\{ \begin{array}{ll}
        -\frac{a_1(x)+\sqrt{a_1^2(x)+b_1^4(x)}}{b_1(x)} & \text{if}~b_1(x)\ne 0, \\
         0& \text{if}~b_1(x)= 0,
    \end{array}  \right.
\end{equation}
\begin{equation}
    \phi_{\rm F}(a_1(x),b_1(x)):=\left\{ \begin{array}{ll}
        -\frac{\max\{a_1(x)+b_1^2(x),0\}}{b_1(x)} & \text{if}~b_1(x)\ne 0, \\
         0& \text{if}~b_1(x)= 0.
    \end{array}  \right.
\end{equation}
The feedback $\phi_{\rm S}(a_1(x),b_1(x))$ is algebraically equivalent to the scalar-input specialization of the Sontag-type safe controller \cite[(13)--(14)]{krstic2023inverse}, which guarantees both safety and inverse optimality.
In the following, for clarity and brevity, let $(\phi_0,\phi_1)$ generically denote $(\phi_{\rm S}(a_0,b_0),\phi_{\rm S}(a_1,b_1))$ or $(\phi_{\rm F}(a_0,b_0),\phi_{\rm F}(a_1,b_1))$.
Denote $b(x):=[b_0(x)~b_1(x)]^\top$ and
\begin{equation}
    \Delta(x):=-\frac{a_0(x)}{b_0(x)}-\frac{a_1(x)}{b_1(x)}.
\end{equation}
Next, let $\lambda:\mathbb{R}\to(0,1)$ be a smooth nondecreasing function satisfying
\begin{equation}
\lim_{z\to -\infty}\lambda(z)=0,
\quad \text{and} \quad
\lim_{z\to +\infty}\lambda(z)=1.
\end{equation}
For example, one may choose $\lambda$ as the logistic function
\begin{equation}\label{eq:lambda}
    \lambda(z):=\frac{1}{1+e^{-z}},
\end{equation}
while other possible choices include
\begin{equation}
    \lambda(z):=\frac{1+\tanh(z)}{2}
    \quad \text{and} \quad
    \lambda(z):=\frac{z+\sqrt{1+z^2}}{2\sqrt{1+z^2}}.
\end{equation}
Hence, we do not merely propose a single universal formula, or a pair of universal formulas, but a family of universal formulas parametrized by the nondecreasing function $\lambda(\cdot)$. 

Let us define $k_{\rm l}:\mathbb{R}^n\to \mathbb{R}$ and $k_{\rm m}:\mathbb{R}^n\to \mathbb{R}$ as
    \begin{equation}
        k_{\rm l}(x):=k_{\rm m}(x):=\left \{
        \begin{array}{ll}
         \min\{\phi_0,\phi_1\}\quad& \text{if}~b_0, b_1\ge 0, |b|\ne 0, \\
         \max\{\phi_0,\phi_1\}& \text{if}~b_0, b_1\le 0, |b|\ne 0,\\
         0& \text{if}~b=0.
    \end{array}
        \right.
    \end{equation}
    If $b_0\cdot b_1<0$, define $i$ as the index with $b_i(x)>0$ and $j$ as the index with $b_j(x)<0$,
\begin{equation}
    k_{\rm l}(x):=\left(1-\lambda\left(\Delta\right) \right)\max\left\{\phi_i,-\frac{a_j}{b_j} \right\}+\lambda\left(\Delta \right)\min\left\{\phi_j,-\frac{a_i}{b_i} \right\} \label{eq:kl}
\end{equation}
and
\begin{equation}\label{eq:km}
    k_{\rm m}(x):=\min\left\{\phi_j,\max \left\{ \phi_i,\frac{\Delta}{2}\right\}\right\}.
\end{equation}

The following is one of the paper's main results.
\begin{theorem}\label{thm:2}
    Consider the system~\eqref{eq:nls} and assume that there exist a CLF $V$ and a CBF $h$. If, 
    for all $x\in\mathbb{R}^n$,
    \begin{equation}\label{eq:14}
        b_0(x)\cdot b_1(x)<0\implies-\frac{a_j(x)}{b_j(x)}<-\frac{a_i(x)}{b_i(x)},
    \end{equation}
    where the indices $i,j\in \{0,1\}$ denote the unique indices such that $b_i(x)>0$ and $b_j(x)<0$, 
    then there exists a function $k:\mathbb{R}^n \to \mathbb{R}$ such that, with $u = k(x)$, \eqref{eq:clf}-\eqref{eq:cbf} hold simultaneously for all $x\in\mathbb{R}^n\backslash \{0\}$. 
    Furthermore, if \eqref{eq:14} holds, then both the state-feedback laws $u = k_{\rm l}(x)$ and $u = k_{\rm m}(x)$ are continuous on $\mathbb{R}^n\backslash \{0\}$ and ensure that \eqref{eq:clf}-\eqref{eq:cbf} hold simultaneously.
\end{theorem}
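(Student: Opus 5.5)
The plan is to reduce everything to a pointwise interval argument on $\mathbb{R}^n\backslash\{0\}$, splitting by the signs of $b_0,b_1$, and then to check continuity separately. Two elementary facts about the universal formulas will do most of the work. For $k\in\{0,1\}$ with $b_k\ne0$, direct substitution gives
\begin{equation*}
a_k+b_k\phi_{\rm S}(a_k,b_k)=-\sqrt{a_k^2+b_k^4}<0,\qquad a_k+b_k\phi_{\rm F}(a_k,b_k)=\min\{a_k,-b_k^2\}<0.
\end{equation*}
If instead $b_k=0$ and $x\ne0$, then $a_k<0$ by \eqref{eq:defCLF} for $k=0$ and by \eqref{eq:defCBF} for $k=1$. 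We also have the sign property $\operatorname{sign}\phi_k=-\operatorname{sign}b_k$, since $a_k+\sqrt{a_k^2+b_k^4}\ge0$ and $\max\{\cdot,0\}\ge0$. Finally, $\phi_k$ is continuous on $\mathbb{R}^n\backslash\{0\}$, because $a_k<0$ wherever $b_k=0$ (Sontag's argument, and for $\phi_{\rm F}$ the $\max$ vanishes near such points). The first assertion of the theorem follows once $k_{\rm l}$ or $k_{\rm m}$ is shown to work.

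\emph{Same-sign case.} Suppose $b_0,b_1\ge0$ and $b\ne0$. Put $u=\min\{\phi_0,\phi_1\}$. For each $k$ with $b_k>0$ we get $a_k+b_ku\le a_k+b_k\phi_k<0$. For each $k$ with $b_k=0$ we get $a_k+b_ku=a_k<0$. The case $b_0,b_1\le0$ with $\max$ is symmetric, and $b=0$ is immediate.

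\emph{Opposite-sign case.} Let $L:=-a_j/b_j$ and $U:=-a_i/b_i$. Then \eqref{eq:clf}--\eqref{eq:cbf} are equivalent to $u\in(L,U)$, which is nonempty by \eqref{eq:14}, and $\Delta=L+U$. From the facts above, $\phi_i<U$ and $\phi_j>L$. Hence $A:=\max\{\phi_i,L\}\in[L,U)$ and $B:=\min\{\phi_j,U\}\in(L,U]$. Since $\lambda(\Delta)\in(0,1)$ strictly, the combination $k_{\rm l}=(1-\lambda)A+\lambda B$ lies strictly between $L$ and $U$. For $k_{\rm m}$, the midpoint $\Delta/2\in(L,U)$ gives $\max\{\phi_i,\Delta/2\}\in(L,U)$, and taking the minimum with $\phi_j>L$ keeps the value in $(L,U)$.

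\emph{Continuity (main obstacle).} Inside each open region where the case definition is fixed, continuity follows from continuity of $\phi_k$, of $a_k/b_k$, and of $\lambda$. The delicate points are the switching surfaces, which I would treat one by one.

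(a) Points with $b_i=0$ and $b_j<0$. Approaching from the opposite-sign region, $a_i<0$ and $b_i\to0^+$, so $U\to+\infty$, while $L$ stays bounded and $\phi_i\to0$. Hence $\Delta\to+\infty$ and $\lambda\to1$, and $A$ stays bounded, so $k_{\rm l}\to\phi_j$. Likewise $\max\{\phi_i,\Delta/2\}\to+\infty$, so $k_{\rm m}\to\phi_j$. On the other side the law is $\max\{\phi_0,\phi_1\}$, which at $b_i=0$ equals $\max\{\phi_j,0\}=\phi_j$ by the sign property. So the limits agree.

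(b) Points with $b_j=0$ and $b_i>0$. Symmetrically, $L\to-\infty$ and $\lambda\to0$, so both laws tend to $\phi_i=\min\{\phi_0,\phi_1\}$.

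(c) Points with $b=0$, $x\ne0$. Here $L\to-\infty$, $U\to+\infty$ and $\phi_0,\phi_1\to0$. Then $A\to0$ and $B\to0$, so $k_{\rm l}\to0$. Also, $\phi_i\le0\le\phi_j$ shows that $k_{\rm m}$ is the clamp of $\Delta/2$ to $[\phi_i,\phi_j]$, so it is squeezed to $0$.

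The main care is in (a)–(c): one must verify boundedness of the non-dominant terms so that the factor $(1-\lambda)$ or $\lambda$ kills them. This is where the assumption $a_k<0$ on $\{b_k=0\}$ is essential.
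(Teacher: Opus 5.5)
Your proposal is correct and follows essentially the same route as the paper's proof. It uses the same sign-based case split, the same bracketing of $k_{\rm l}$ between $\max\{\phi_i,L\}\in[L,U)$ and $\min\{\phi_j,U\}\in(L,U]$, the same bound for $k_{\rm m}$, and the same limit checks on the switching surfaces $b_0=0$ or $b_1=0$. You usefully make explicit that $a_k<0$ on $\{b_k=0\}$ (for $x\neq0$) is what forces $U\to+\infty$ and $L\to-\infty$.

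One small correction: your strict sign property $\operatorname{sign}\phi_k=-\operatorname{sign}b_k$ fails for the Freeman-type formula, since $\phi_{\rm F}(a_k,b_k)=0$ whenever $a_k+b_k^2\le 0$, even if $b_k\neq 0$. Every place you use it only needs the weak version $\phi_i\le 0\le\phi_j$, which your own justification ("$\ge 0$") actually gives, so nothing breaks.
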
 

\begin{proof}
    See Appendix \ref{appexdix:A}.
\end{proof}

\begin{remark}\rm 
In \cite{xu2018constrained}, control sharing of multiple CBFs is characterized via necessary and sufficient conditions for a common input. The condition in Theorem~\ref{thm:2} for a controller $u=k(x)$ satisfying \eqref{eq:clf}–\eqref{eq:cbf} is analogous, here for a CLF–CBF pair. The contribution is an explicit continuous stabilizing feedback via universal formulas, yielding a closed-form controller that enforces safety and asymptotic stability without online optimization.
\end{remark}

\begin{remark}\rm 
The function $\max\{x,0\}$ is continuous but not differentiable at $x=0$. It admits smooth approximations, e.g., $\varepsilon\ln(1+e^{x/\varepsilon})\to \max\{x,0\}$ as $\varepsilon\to 0$. Using $\max\{x,y\}=y+\max\{x-y,0\}$, $\min\{x,y\}=y-\max\{y-x,0\}$, and $\min\{x,0\}=x-\max\{x,0\}$, the laws in \eqref{eq:kl}--\eqref{eq:km} can be approximated by smooth functions.
\end{remark}

\begin{remark}\rm 
    When $b_0(x) \cdot b_1(x) < 0$, an equivalent and ``symmetric’’ expression for $k_{\rm m}(\cdot)$ is the following: 
\begin{equation}
k_{\rm m}(x)
:= \max\left\{\phi_i, \min\left\{ \phi_j, \frac{\Delta}{2}\right\} \right\}.
\end{equation}
    In \eqref{eq:km}, the term $\frac{1}{2}\big(-\frac{a_i}{b_i}-\frac{a_j}{b_j}\big)$ may be replaced by any convex combination of  $-\frac{a_i}{b_i}$ and $-\frac{a_j}{b_j}$, i.e., 
    \begin{equation}\label{eq:kma}
    k_{{\rm m},\eta}(x):=\min\left\{\phi_j,\max \left\{ \phi_i,-(1-\eta)\frac{a_i}{b_i}-\eta\frac{a_j}{b_j}\right\}\right\},
\end{equation}
where $\eta\in(0,1)$; the proof and result are unchanged. This flexibility is used in the next section when the condition fails.
\end{remark}

In Theorem~\ref{thm:2}, two feedback laws, $k_{\rm l}$ and $k_{\rm m}$, are constructed for safe stabilization. In both cases, the pair $(\phi_0,\phi_1)$ can be instantiated using either Sontag’s universal formula $(\phi_{\rm S}(a_0,b_0),\phi_{\rm S}(a_1,b_1))$ or the PMN formula $(\phi_{\rm F}(a_0,b_0),\phi_{\rm F}(a_1,b_1))$. More generally, other universal formulas satisfying the standard regularity properties required in this context may be employed, including those proposed in \cite{lin1991universal,lin1995control}.

\subsection{Ensuring Safety with Incompatible CLFs and CBFs}

Theorem~\ref{thm:2} shows that if the CLF $V$ and CBF $h$ are compatible, i.e., \eqref{eq:14} holds, then safe stabilizers exist, e.g., $u=k_{\rm l}(x)$ or $u=k_{\rm m}(x)$. When \eqref{eq:14} fails, safety must be prioritized. One can modify $k_{\rm l}$ or $k_{\rm m}$ to preserve forward invariance while driving the state toward the safe-set boundary; once compatibility is restored, the controller resumes simultaneous safety and stabilization.

We define the following modified feedback control laws:
\begin{equation}
  k_{\rm l}^*(x) :=
  \begin{cases}
    \displaystyle 
    -\dfrac{a_1}{b_1},  &\text{if } b_0 b_1 < 0 \text{ and \eqref{eq:14} does not hold}, \\
    k_{\rm l}(x), 
      & \text{otherwise,}
  \end{cases}
\end{equation}
and
\begin{equation}
  k_{\rm m}^*(x) :=
  \begin{cases}
    \displaystyle 
    \min\!\left\{\phi_j,\; \max\!\left\{\phi_i,\; -\dfrac{a_1}{b_1}\right\}\right\}=-\dfrac{a_1}{b_1}, & \\
    
      &\hspace{-4cm} \text{if } b_0 b_1 < 0 \text{ and \eqref{eq:14} does not hold}, \\
    k_{\rm m}(x), 
      &\hspace{-4cm} \text{otherwise.}
  \end{cases}
\end{equation}

\begin{theorem}\label{thm:3}
    Consider the system~\eqref{eq:nls} and assume that there exist a CLF $V$ satisfying the SCP and a CBF $h$, and that $0 \in \operatorname{Int}(C)$. Then,
    \begin{enumerate}[(i)] 
        \item The control laws $u=k_{\rm l}^*(x)$ and $u=k_{\rm m}^*(x)$ are well defined on $\mathbb{R}^n$ and continuous on $\mathbb{R}^n\backslash\{0\}$.

        \item Under the control law $u=k_{\rm l}^*(x)$ or $u=k_{\rm m}^*(x)$, the set $\operatorname{Int}(C)$ is forward invariant.

        \item Under the control law $u=k_{\rm l}^*(x)$ or $u=k_{\rm m}^*(x)$, the origin of the closed-loop system is  asymptotically stable.

        \item In addition, if $b_1\to 0$ as $x\to0$, then the control laws $u=k_{\rm l}^*(x)$ and $u=k_{\rm m}^*(x)$ are also continuous at the origin $x=0$.
    \end{enumerate}
\end{theorem}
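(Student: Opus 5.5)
We prove the four items in order, using Theorem~\ref{thm:2} as the workhorse on the compatible region.

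\emph{Step 1: well-definedness and continuity.} Split $\mathbb{R}^n\backslash\{0\}$ into three regions.
\begin{itemize}
\item[(a)] $b_0b_1\ge 0$, where $k_{\rm l}^*=k_{\rm l}$ and $k_{\rm m}^*=k_{\rm m}$.
\item[(b)] $b_0b_1<0$ with \eqref{eq:14} holding; here $\Delta$ is well defined.
\item[(c)] $b_0b_1<0$ with \eqref{eq:14} failing.
\end{itemize}
Theorem~\ref{thm:2} (its proof only used \eqref{eq:14} locally) gives continuity on the open region where \eqref{eq:14} holds strictly. What remains is continuity across the interface between (b) and (c), where $-a_j/b_j=-a_i/b_i$.

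At such points the feasible interval collapses to the single value $-a_1/b_1=-a_0/b_0$. I will check that both $k_{\rm l}$ and $k_{\rm m}$ converge to this common value.
\begin{itemize}
\item For $k_{\rm m}$: since $\phi_i<-a_i/b_i$ and $\phi_j>-a_j/b_j$ (the strict universal-formula inequalities, valid since $b\neq 0$), the clamp $\min\{\phi_j,\max\{\phi_i,\cdot\}\}$ applied to $-a_1/b_1$ returns $-a_1/b_1$ itself. This justifies the identity written in the definition of $k_{\rm m}^*$. Continuity then follows because $\Delta/2\to -a_1/b_1$ at the interface.
\item For $k_{\rm l}$: both $\max\{\phi_i,-a_j/b_j\}$ and $\min\{\phi_j,-a_i/b_i\}$ lie in $[-a_j/b_j,-a_i/b_i]$, which shrinks to the single point. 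Hence the convex combination converges regardless of $\lambda$.
\end{itemize}
Continuity at points where some $b_k\to 0$ follows as in Theorem~\ref{thm:2}.

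\emph{Step 2: forward invariance of $\operatorname{Int}(C)$.} On regions (a) and (b), Theorem~\ref{thm:2} gives $F_1<0$. On (c), $u=-a_1/b_1$ yields $F_1=0$, so $\dot h=-\alpha_h(h)$. In every case $\dot h\ge-\alpha_h(h)$ holds, and where $b_1=0$ the CBF definition gives this directly. By the comparison lemma with $\alpha_h\in\mathcal{K}_\infty^e$, $h(x(t))$ is bounded below by the solution of $\dot y=-\alpha_h(y)$ with $y(0)=h(x_0)>0$, which remains positive. Hence $\operatorname{Int}(C)$ is forward invariant. Local Lipschitzness is not available, so existence comes from continuity via Peano; the comparison argument applies to every solution.

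\emph{Step 3: asymptotic stability.} The key claim is that region (c) stays away from the origin: there is a neighborhood $U$ of $0$ on which \eqref{eq:14} holds, so that $k^*=k$ and Theorem~\ref{thm:2} gives $\dot V\le-\alpha(|x|)<0$ on $U\backslash\{0\}$.

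To prove this, I use $0\in\operatorname{Int}(C)$. Near $0$, $h\ge h_0>0$, so $\alpha_h(h)$ is bounded below by a positive constant. Since $L_fh(0)=0$ (because $f(0)=0$), it follows that $a_1<-c<0$ near $0$.

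This is the main obstacle. I will try to show that, when $b_0b_1<0$, the CBF bound $-a_1/b_1$ sits on the correct side with a margin of order $c/|b_1|$. Meanwhile, by the SCP, the CLF bound $-a_0/b_0$ is on the correct side of some $u$ with $|u|$ small. Concretely:
\begin{itemize}
\item If $b_1>0$ (so $i=1$, $j=0$), we need $-a_0/b_0<-a_1/b_1$. Here $-a_1/b_1>c/b_1$ and $b_1$ is bounded near $0$. The SCP gives $-a_0/b_0<\varepsilon$ when $b_0<0$, and we may pick $\varepsilon<c/\sup|b_1|$.
\item If $b_1<0$, the symmetric argument applies.
\end{itemize}
This yields $U$. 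Stability then follows from $V$ being a Lyapunov function on $U$: sublevel sets inside $U$ are invariant, and attractivity follows from $\dot V<0$.

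\emph{Step 4: continuity at $0$.} If $b_1\to0$, then $-a_1/b_1\to\pm\infty$ with $|{-a_1/b_1}|\gtrsim c/|b_1|$, while the SCP forces $\phi_0\to 0$ (as in Sontag's continuity argument).
\begin{itemize}
\item When $b_0b_1\ge0$: $\phi_1$ is the Sontag or Freeman formula with $a_1<-c$. Its magnitude tends to $0$, because $a_1+\sqrt{a_1^2+b_1^4}=O(b_1^4/|a_1|)$ and $\max\{a_1+b_1^2,0\}=0$. So $\min$/$\max\{\phi_0,\phi_1\}\to0$.
\item When $b_0b_1<0$: $\phi_1\to0$, $\phi_0\to0$, and the constraints $-a_1/b_1$ and $\Delta/2$ diverge in the non-binding direction. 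The clamp in $k_{\rm m}$ then returns $\phi_0$ or $\phi_1$, both of which tend to $0$. For $k_{\rm l}$, $\Delta\to+\infty$ gives $\lambda\to1$, so $k_{\rm l}\to\min\{\phi_j,-a_i/b_i\}$, which tends to $0$ by the same reasoning. I will also check the bounded factor $(1-\lambda)\max\{\cdot\}$, where I expect to need $\phi_i\to0$ and $-a_j/b_j\to 0$ or negative.
\end{itemize}
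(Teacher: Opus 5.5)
Your four steps follow the paper's proof closely: continuity via the interface $-a_j/b_j=-a_i/b_i$, safety from $F_1\le 0$, stability by showing \eqref{eq:14} holds automatically near the origin, then a separate check at $x=0$. Your Step~3 is sharper than the paper's. The SCP only yields the one-sided bounds $-a_0/b_0<\varepsilon$ when $b_0<0$ and $-a_0/b_0>-\varepsilon$ when $b_0>0$. It does not give the paper's claim $|a_0|/|b_0|\to 0$, which fails whenever points $x\neq0$ with $b_0(x)=0$ (where $a_0<0$) accumulate at the origin. Pairing your one-sided bounds with $|{-a_1/b_1}|\ge c/\sup|b_1|$ is exactly what is needed.

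In Step~1, however, you assert without proof that the set where \eqref{eq:14} holds is open. This is the content of the paper's well-definedness paragraph, and you need the argument. Suppose points of region (c) accumulated at some $x^\ast\neq0$ with $b_0(x^\ast)b_1(x^\ast)=0$. Each vanishing $b_k(x^\ast)$ has $a_k(x^\ast)<0$ by \eqref{eq:defCLF} or \eqref{eq:defCBF}. So $-a_k/b_k$ diverges in exactly the direction that enforces \eqref{eq:14}, a contradiction.

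Step~4 fails as written. The claim $\Delta\to+\infty$ (hence $\lambda\to1$) is false in general.
\begin{itemize}
\item If $b_1<0$ (so $i=0$, $j=1$), then $-a_1/b_1\le -c/|b_1|\to-\infty$, while $-a_0/b_0$ is only bounded below by $-\varepsilon$. So typically $\Delta\to-\infty$.
\item If $b_1>0$, then $-a_0/b_0$ is only bounded above and can itself tend to $-\infty$. So $\Delta$ need not have a limit.
\end{itemize}
For the same reason $\Delta/2$ need not diverge, and the clamp in $k_{\rm m}$ may return $\Delta/2$ rather than $\phi_0$ or $\phi_1$.

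The repair avoids $\lambda$ altogether.
\begin{itemize}
\item For $k_{\rm m}$: use $\phi_i\le k_{\rm m}\le\phi_j$, valid since $\phi_i\le 0\le\phi_j$, and squeeze.
\item For $k_{\rm l}$, case $b_1>0$: $\phi_1\le\max\{\phi_1,-a_0/b_0\}\le\max\{\phi_1,\varepsilon\}$, and $\min\{\phi_0,-a_1/b_1\}=\phi_0$ eventually.
\item For $k_{\rm l}$, case $b_1<0$: $\max\{\phi_0,-a_1/b_1\}=\phi_0$ eventually, and $\min\{\phi_1,-\varepsilon\}\le\min\{\phi_1,-a_0/b_0\}\le\phi_1$.
\end{itemize}
Then use $\phi_0,\phi_1\to0$ with $\varepsilon$ arbitrary, so each term of the convex combination tends to $0$. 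This direct argument is worth getting right. The paper's one-line proof of (iv) reuses the $|b|=0$ limit from Theorem~\ref{thm:2}, which relies on $a_0,a_1<0$ at the limit point. That does not apply at $x=0$, where $a_0(0)=0$.

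Finally, in Step~2 the assertion that the solution of $\dot y=-\alpha_h(y)$ stays positive needs $\alpha_h$ locally Lipschitz (or $\int_{0^+}ds/\alpha_h(s)=\infty$). For, e.g., $\alpha_h(s)=\sqrt{s}$ on $s\ge 0$, $y$ reaches zero in finite time, and region (c) gives $\dot h=-\alpha_h(h)$ with equality. The paper leaves this implicit as well.
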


\begin{proof}
    See Appendix \ref{appexdix:B}.
\end{proof}

The condition in Theorem~3(iv), $b_1\to 0$ as $x\to 0$, ensures continuity of the feedback at the origin. It is not needed for the safety or stability results in (i)–(iii). We show next that a modified feedback removes this requirement while preserving forward invariance and asymptotic stability.

As shown in the proof of Theorem~\ref{thm:3}, there exists a sufficiently small neighborhood $\mathcal{O}$ of the origin $x=0$ in which the implication~\eqref{eq:14} holds. Consequently, under the control law $u = k_{\rm l}^*(x)$ or $u = k_{\rm m}^*(x)$, the CLF and CBF inequalities \eqref{eq:clf}-\eqref{eq:cbf} are satisfied for all $x \in \mathcal{O} \setminus \{0\}$. Moreover, since $0 \in \operatorname{Int}(\mathcal{C})$, one has $a_1(0) < 0$. By continuity of $a_1(x)$ and $b_1(x)$, there exists a neighborhood $\mathcal{O}$ such that $a_1(x) < 0$ holds for all $x \in \mathcal{O}$. In this neighborhood, the SCP of the CLF ensures that the stabilizing control input can be chosen sufficiently small so that the inequality $a_1(x) + b_1(x)u < 0$ holds for all $x \in \mathcal{O}$. This observation implies that, locally around the origin, the CBF constraint does not become active and does not interfere with stabilization, and thus, motivates the following modification of the control law:
\begin{equation}\label{eq:266}
k_{\rm l}^\sharp(x):=(1-\mu_c(x))k_{\rm l}^*(x)+ \mu_c(x)\phi_0
\end{equation}
where $\mu_c(x):=\frac{1}{1+c|x|^2}$ with $c>0$. Since $\mu_c(x)\to 1$ as $|x|\to 0$ and $\mu_c(x)\to 0$ as $|x|\to\infty$, choosing $c$ sufficiently large ensures that the region where $\mu_c(x)\ge \bar\mu$ is contained in $\mathcal O$. The specific form of $\mu_c$ is not essential; any continuous weighting function with values in $[0,1]$ having the same qualitative properties may be used instead.

\begin{corollary}
Consider the system~\eqref{eq:nls} and assume that there exist a CLF $V$ satisfying the SCP and a CBF $h$, and that $0 \in \operatorname{Int}(C)$. Then, there exists $c_0>0$ such that for all $c>c_0$, the control law $u=k_{\rm l}^\sharp(x)$ is well defined and continuous on $\mathbb{R}^n$. Under the control law $u=k_{\rm l}^\sharp(x)$, the set $\operatorname{Int}(C)$ is forward invariant, and the origin of the closed-loop system is asymptotically stable.
\end{corollary}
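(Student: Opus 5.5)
We prove the corollary from Theorem~\ref{thm:3} together with the local properties of $\phi_0$ near the origin. Throughout, fix a neighborhood $\mathcal{O}$ of $0$ such that three things hold on it: (a) $a_1(x)<0$; (b) implication \eqref{eq:14} holds, so that $k_{\rm l}^*=k_{\rm l}$ on $\mathcal{O}$; and (c) the SCP gives $|\phi_0(x)|$ small. The existence of such an $\mathcal{O}$ is stated in the proof of Theorem~\ref{thm:3} and in the discussion preceding \eqref{eq:266}.

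\textbf{Step 1: continuity.} On $\mathbb{R}^n\setminus\{0\}$, the map $k_{\rm l}^\sharp$ is a convex combination, with continuous weight $\mu_c$, of two functions continuous there. These are $k_{\rm l}^*$, by Theorem~\ref{thm:3}(i), and $\phi_0$, by the standard universal-formula property under the SCP.

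At the origin the issue is that $k_{\rm l}^*$ may fail to have a limit. It is, however, locally bounded, since on $\mathcal{O}$ it lies between quantities built from $\phi_0,\phi_1$. We therefore write
\[
k_{\rm l}^\sharp=\phi_0+(1-\mu_c)(k_{\rm l}^*-\phi_0),
\]
where $1-\mu_c=c|x|^2/(1+c|x|^2)\to0$. If $k_{\rm l}^*$ is merely bounded near $0$, this gives $k_{\rm l}^\sharp(x)\to\phi_0(0)=0$.

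Boundedness is the part that needs checking. Here $-a_1/b_1$ may blow up as $b_1\to0$ while $a_1<0$. But on $\mathcal{O}$, when $b_0b_1<0$ the law $k_{\rm l}$ is a convex combination of $\max\{\phi_i,-a_j/b_j\}$ and $\min\{\phi_j,-a_i/b_i\}$. These terms can be unbounded, since for the CBF index $-a_1/b_1$ behaves like $+\infty$ or $-\infty$ with the favorable sign.

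If boundedness fails, the fallback is a different blending: replace $k_{\rm l}^*$ near the origin by the interval-projection argument. The point is that on $\mathcal{O}$ we have $a_1<0$, so $u=0$ satisfies the CBF inequality and small $u$ do too. One would then show that the relevant terms can be clipped to remain near $\phi_0$. This boundedness and continuity at $0$ is the main obstacle; I would first try to prove that $\lambda(\Delta)$ saturates appropriately as $b_1\to0$, so that the unbounded term receives vanishing weight.

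\textbf{Step 2: forward invariance.} Choose $c_0$ so that for $c>c_0$ the set $\{\mu_c\ge\bar\mu\}$, and indeed all of $\operatorname{supp}$-relevant weight, lies in $\mathcal{O}$. Then argue region by region.

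Outside $\mathcal{O}$, the weight $\mu_c$ is small. Theorem~\ref{thm:3}(ii) ensures strict satisfaction of $F_1<0$ by $k_{\rm l}^*$ there. Since $F_1$ is affine in $u$, we get
\[
F_1\big(k_{\rm l}^\sharp\big)=(1-\mu_c)F_1(k_{\rm l}^*)+\mu_c F_1(\phi_0).
\]
This is negative near $\partial\mathcal{C}$ provided $\mu_c\,|b_1\phi_0|$ is dominated. Because $\partial\mathcal{C}$ stays at positive distance from $0$ and we may work on compact pieces, taking $c$ large makes $\mu_c$ uniformly small on $\{h\le\epsilon\}\cap$ bounded sets. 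For the unbounded part one needs the margin to dominate.

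Inside $\mathcal{O}$, both $k_{\rm l}$ and $\phi_0$ satisfy $F_0<0$ and $F_1<0$. For $\phi_0$ this uses $a_1<0$ and smallness via the SCP. Affinity then gives both inequalities for the convex combination.

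\textbf{Step 3: asymptotic stability.} On $\mathcal{O}\setminus\{0\}$, the inequality $F_0<0$ holds for both components and hence for $k_{\rm l}^\sharp$. This yields $\dot V\le-\alpha(|x|)$, and with it local asymptotic stability via the standard Lyapunov argument.
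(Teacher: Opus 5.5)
There is a genuine gap in your Step~1 and a more serious one in Step~2.

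\textbf{Step~1 (continuity at the origin).} You leave local boundedness of $k_{\rm l}^*$ near the origin open, and the worry that led you to a ``fallback'' is unfounded. Each potentially unbounded ratio in \eqref{eq:kl} is clipped by a $\phi$ lying on the other side of it. When $b_0b_1<0$ one always has $\phi_i<-a_i/b_i$, $-a_j/b_j<\phi_j$ and $\phi_i\le 0\le\phi_j$. Hence $\phi_i\le\max\{\phi_i,-a_j/b_j\}\le\phi_j$ and $\phi_i\le\min\{\phi_j,-a_i/b_i\}\le\phi_j$, so $k_{\rm l}\in[\phi_i,\phi_j]$. Otherwise $k_{\rm l}$ is $\min\{\phi_0,\phi_1\}$, $\max\{\phi_0,\phi_1\}$ or $0$. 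Thus on $\mathcal{O}$, where $k_{\rm l}^*=k_{\rm l}$, you get $|k_{\rm l}^*|\le|\phi_0|+|\phi_1|$. This is bounded because $\phi_0\to0$ and $\phi_1$ is continuous on all of $\mathbb{R}^n$ (the CBF condition forces $a_1<0$ wherever $b_1=0$). This is exactly what the paper's parenthetical remark ($b_0\to0$, $\phi_1$ locally bounded) encodes. With it, your decomposition $k_{\rm l}^\sharp=\phi_0+(1-\mu_c)(k_{\rm l}^*-\phi_0)$ settles continuity at $0$. A re-clipped or re-blended law, by contrast, would no longer be $k_{\rm l}^\sharp$ and could not prove the statement.

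\textbf{Step~2 (forward invariance).} You read Theorem~\ref{thm:3}(ii) as giving a strict margin $F_1(k_{\rm l}^*)<0$ outside $\mathcal{O}$. But on the set where $b_0b_1<0$ and \eqref{eq:14} fails, $k_{\rm l}^*=-a_1/b_1$ gives $F_1=0$ exactly (this is \eqref{eq:44}). There your affine identity reduces to $F_1(k_{\rm l}^\sharp)=\mu_c F_1(\phi_0)$, and $F_1(\phi_0)>0$ strictly. Indeed, if $b_0>0>b_1$, then $\phi_0<-a_0/b_0$, and failure of \eqref{eq:14} means $-a_0/b_0\le-a_1/b_1$, so $a_1+b_1\phi_0=b_1(\phi_0+a_1/b_1)>0$; the other sign pattern is symmetric. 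Since $\mu_c>0$ everywhere, no choice of $c$ produces a margin that could dominate $\mu_c F_1(\phi_0)$. On the incompatible set $\dot h<-\alpha_h(h)$. At any point where that set meets $\partial\mathcal{C}$ one has $\dot h=-\mu_cF_1(\phi_0)<0$, so nearby trajectories starting in $\operatorname{Int}(\mathcal{C})$ leave it.

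The paper's own proof does not supply the missing idea either. Its assertion that $k_{\rm l}^\sharp$ ``coincides with $k_{\rm l}^*$ outside $\mathcal{O}$ up to a vanishing weight'' and therefore keeps the CBF inequality is precisely the step you could not justify. Your inside/outside split does become a proof if the weight vanishes identically off $\mathcal{O}$. An example is $\max\{0,1-c|x|^2\}$ with $\{x:|x|^2\le 1/c\}\subset\mathcal{O}$, which the paper's remark on the form of $\mu_c$ permits. Then $k_{\rm l}^\sharp=k_{\rm l}^*$ outside $\mathcal{O}$ and Theorem~\ref{thm:3}(ii) applies verbatim there. Your convexity argument inside $\mathcal{O}$ and your Step~3 go through unchanged.
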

\begin{proof}
Well-definedness and continuity of $k_{\rm l}^\sharp(x)$ on $\mathbb{R}^n \setminus \{0\}$ follow directly from Theorem~\ref{thm:3} and the fact that $\mu_c$ is smooth and both $k_{\rm l}^*$ and $\phi_0$ are continuous on $\mathbb{R}^n \setminus \{0\}$. To see continuity at the origin, note that $k_{\rm l}^\sharp(0) = 0$. Moreover, by the SCP of the CLF and the standard properties of the chosen universal formula, one has $\phi_0(x) \to 0$ as $|x| \to 0$. Therefore, 
\begin{equation}
    \lim\limits_{|x| \to 0} k_{\rm l}^\sharp(x)=\lim\limits_{|x| \to 0} \phi_0(x)=0= k_{\rm l}^\sharp(0),
\end{equation}
which proves continuity at the origin. 
(If desired, this limit can be justified by additionally noting that $b_0 \to 0$ and that $\phi_1$ is locally bounded on a neighborhood of the origin, which holds under the standing assumptions in Theorem~\ref{thm:3}.)

We next argue that $k_{\rm l}^\sharp(x)$ preserves safety and asymptotic stability. Since $0 \in \operatorname{Int}(\mathcal{C})$, one has $a_1(0) < 0$. By continuity of $a_1$, there exists a neighborhood $\mathcal{O}$ of the origin such that $a_1(x) < 0$ for all $x \in \mathcal{O}$. In this neighborhood, the SCP of the CLF ensures that the stabilizing input can be chosen sufficiently small so that the CBF inequality remains satisfied, i.e., $a_1(x) + b_1(x)\phi_0(x) < 0, \forall x \in \mathcal{O}$. Choose $c>0$ sufficiently large so that the region $\{x : \mu_c(x) \ge \bar{\mu}\} \subset \mathcal{O}$ for some fixed $\bar{\mu} \in (0,1)$. Then $k_{\rm l}^\sharp$ coincides with $k_{\rm l}^*$ outside $\mathcal{O}$ up to a vanishing weight, and the CBF inequality continues to hold everywhere.  The CLF inequality is handled analogously, using the fact that both $u=k_{\rm l}^*(x)$ and $u=\phi_0(x)$ satisfy it and that the CLF constraint is affine in $u$. Hence, $k_{\rm l}^\sharp$ preserves the safe stabilization property while being continuous everywhere.
\end{proof}

A similar modification can also be applied to $k_{\rm m}^*$, i.e.,
\begin{equation}
k_{\rm m}^\sharp(x):=(1-\mu_c(x))k_{\rm m}^*(x)+ \mu_c(x)\phi_0
\end{equation}
Since the construction follows the same principles and differs only in the choice of the universal formula components, the details are omitted for brevity.

\begin{remark}\rm
We stabilize equilibria in the strict interior of the safe set. The strict barrier inequality produces controls that make the boundary repellent, preventing boundary equilibria and behaviors like ``parameter projection'' in adaptive control or prescribed-time safety \cite{abel2023prescribed_time}.
\end{remark}

\section{Simulation Results}\label{sec:simulation}

In this section, we use an example to illustrate how the proposed framework facilitates the synthesis of a safe stabilizing controller in the presence of nonlinear state constraints.

Consider the system
\begin{subequations}\label{eq:example}
    \begin{eqnarray}
        \dot{x}_1 &=& x_1+\sin x_1 + x_2, \\
        \dot{x}_2 &=& x_1^3 + (1+x_1^2)u.
    \end{eqnarray}
\end{subequations}
The system \eqref{eq:example} is open-loop unstable. Let
\begin{equation}
    V(x) := \frac{1}{2}x_1^2 + \frac{1}{2}\bigl(x_2 + (k_1+1)x_1 + \sin x_1\bigr)^2.
\end{equation}
It can be verified that $V$ is a CLF satisfying the SCP with associated positive definite function
\begin{equation}
    \alpha(x):=\frac{1}{2}\left(k_1x_1^2+k_2\bigl(x_2+(1+k_1)x_1+\sin x_1\bigr)^2\right),
\end{equation}
where $k_1,k_2>0$. We aim to stabilize the origin of \eqref{eq:example} while simultaneously enforcing the nonlinear safety constraint
\begin{equation*}
    \mathcal{C}:=\left\{x\in\mathbb{R}^2:\; h(x):=x_2+q(x_1-d_1)^2+d_2>0\right\},
\end{equation*}
where $d_1,d_2,$ and $q$ are constants. One can verify that $h$ is a CBF for system \eqref{eq:example} with $\alpha_h(s):=s$.

Figure \ref{fig:path2} presents the simulation results for $k_1=1$, $k_2=2$, $q=8$, $d_1=0.5$, and $d_2=1.1$. In the simulation, \eqref{eq:lambda} is used with the parameter choice $c=10^5$ in \eqref{eq:266}. For this example, the coefficients $b_0$ and $b_1$ may take opposite signs, and hence the compatibility condition is not satisfied globally.
As expected, the Sontag-type controller $\phi_{\rm S}(a_0,b_0)$ (green dashed curve) fails to preserve safety and drives the state outside the admissible set. In contrast, both proposed controllers, $u=k_{\rm l}^\sharp(x)$ (blue solid curve) and $u=k_{\rm m}^\sharp(x)$ (red dash-dot curve), constructed from $(\phi_{\rm S}(a_0,b_0),\phi_{\rm S}(a_1,b_1))$, successfully enforce the safety constraint while steering the state toward the origin. To visualize whether the compatibility condition holds, we introduce the indicator $\text{mode}$, where $\text{mode}=1$ means that the compatibility condition is satisfied and $\text{mode}=0$ means that it is not. 
As shown in Fig.~\ref{fig:traj2}, the compatibility condition is satisfied at the beginning of the simulation, then becomes violated over an intermediate time interval, and is eventually restored again. During the interval where $\text{mode}=0$, the closed-loop trajectories under $u=k_{\rm l}^\sharp(x)$ and $u=k_{\rm m}^\sharp(x)$ move toward the boundary of the safe set without crossing it. Once the compatibility condition is recovered, both controllers ensure safety and asymptotic stabilization simultaneously.
  
\begin{figure}[t]
    \centering
    \includegraphics[width=0.6\linewidth]{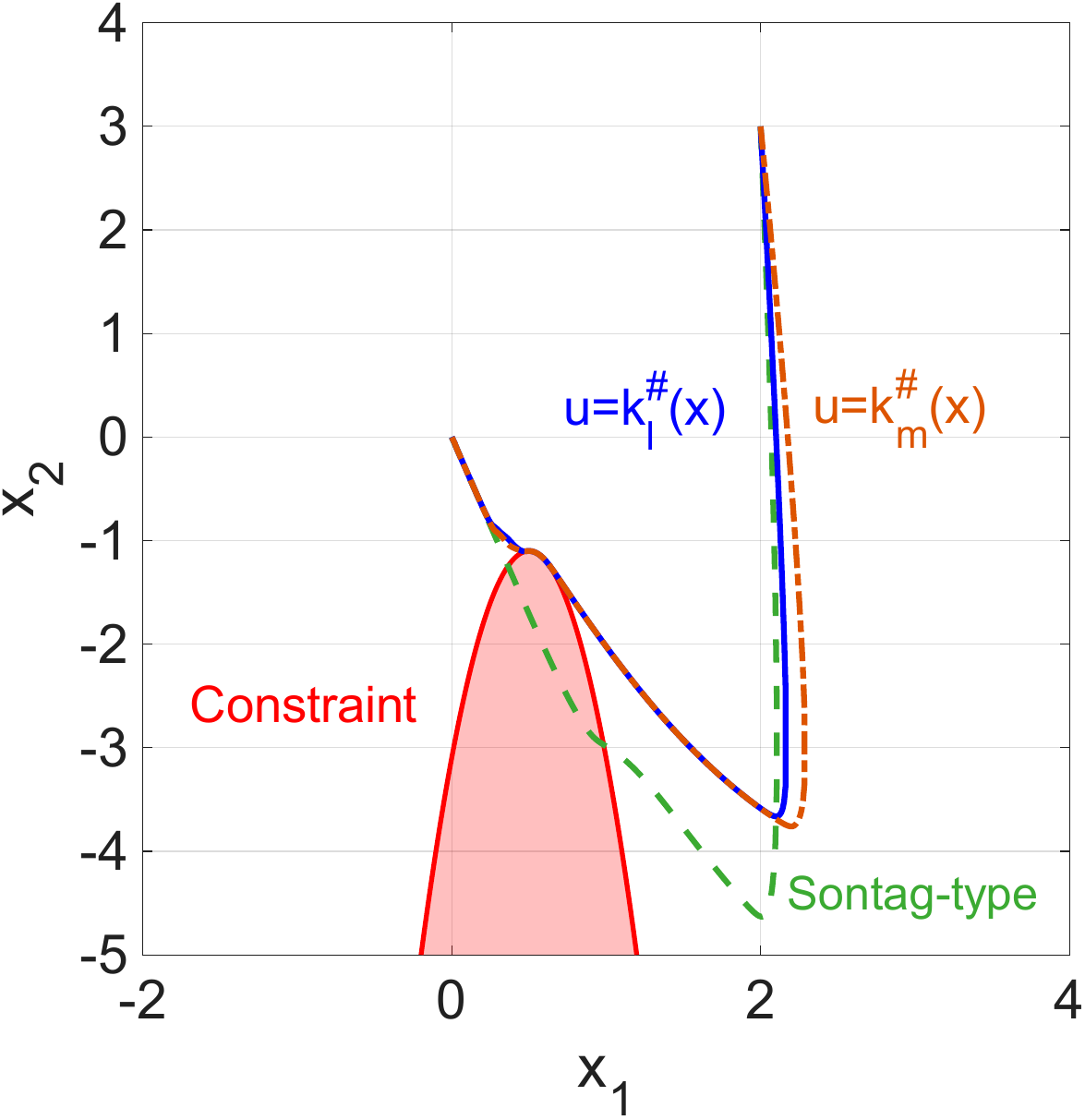}
    \caption{\unboldmath Trajectories of the system under the following inputs:
$u = k_{\rm l}^\sharp(x)$ (blue solid), $u = k_{\rm m}^\sharp(x)$ (red dash-dot), and Sontag-type controller $u = \phi_{\rm S}(a_0,b_0)$ (green dashed), against the constraint (red solid).}
    \label{fig:path2}
\end{figure}

\begin{figure}[t]
    \centering
    \includegraphics[width=0.8\linewidth]{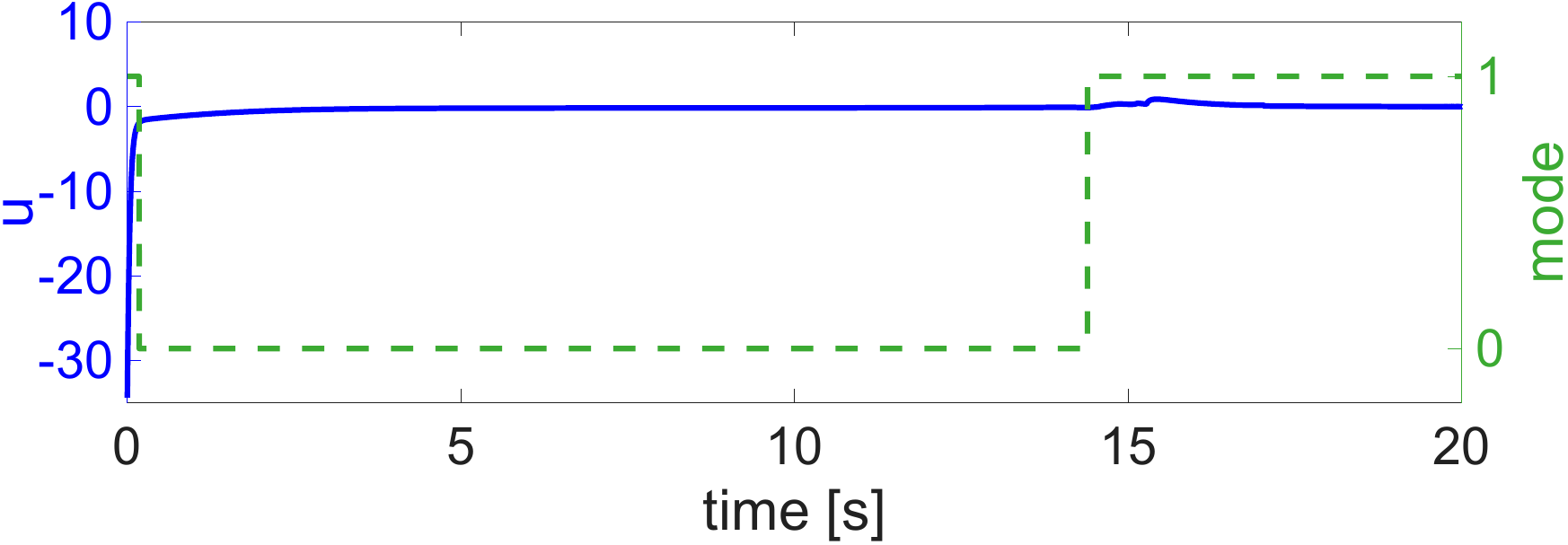}
    \caption{The control action and the mode.}
    \label{fig:traj2}
\end{figure}

\section{Conclusion}\label{sec:conclusion}

This paper shows that, for single-input control-affine nonlinear systems, safe stabilization is governed by an explicit compatibility relation between the CLF and CBF inequalities. When this condition holds, it yields families of closed-form 
feedback laws, parametrized by a free nondecreasing function, 
based on Lie-derivative data and standard universal stabilizers. When it fails, the proposed modification preserves forward invariance while driving the state toward the safe-set boundary, making the safety-stability tradeoff explicit.

\appendices
\section{Proof of Theorem~\ref{thm:2}}\label{appexdix:A}

We first prove that with $u = k_{\rm l}(x)$ and $u = k_{\rm m}(x)$, \eqref{eq:clf}-\eqref{eq:cbf} hold simultaneously (i.e., safe stabilization). Then, we establish the continuity of $k_{\rm l}$ and $k_{\rm m}$.
      \paragraph*{1) Safe Stabilization} We have the following four cases: 
      
      \textbf{Case 1.} ($b_0, b_1\ge 0, |b|\ne 0$) If $b_i > 0$ for $i \in \{0,1\}$, then $\phi_i < -\frac{a_i}{b_i}$, and hence
        \begin{equation}
        b_i u = b_i \min\{\phi_0,\phi_1\} \le b_i \phi_i < -a_i,
        \end{equation}
        which implies that for $i \in \{0,1\}$,
        \begin{equation}
        a_i + b_i u < 0.
        \end{equation}
        If $b_i > 0$ and $b_j = 0$ for $i, j \in \{0,1\}$, then $\phi_i < 0$ and $\phi_j = 0$, and thus $u=\min\{\phi_0, \phi_1\} = \phi_i$. This implies that
        \begin{equation}
        a_i + b_i u < 0
        \quad \text{and} \quad
        a_j + b_j u = a_j < 0.
        \end{equation}      

      \textbf{Case 2.} ($b_0, b_1\le 0, |b|\ne 0$) If $b_i < 0$ for $i \in \{0,1\}$, then $\phi_i > -\frac{a_i}{b_i}$, and hence
        \begin{equation}
        b_i u = b_i \max\{\phi_0,\phi_1\} \le  b_i \phi_i < -a_i,
        \end{equation}
        which implies that for $i \in \{0,1\}$,
        \begin{equation}
        a_i + b_i u < 0.
        \end{equation}
        If $b_i < 0$ and $b_j = 0$ for $i, j \in \{0,1\}$, then $\phi_i > 0$ and $\phi_j = 0$, and thus $u=\max\{\phi_0, \phi_1\} = \phi_i$. This implies that
        \begin{equation}
        a_i + b_i u < 0
        \quad \text{and} \quad
        a_j + b_j u = a_j < 0.
        \end{equation}

      \textbf{Case 3.} ($|b|= 0$)     
      It follows from the definitions of the CLF and CBF in \eqref{eq:defCLF} and \eqref{eq:defCBF} that $b_0 = b_1 = 0$ implies $a_0 < 0$ and $a_1 < 0$. Hence, in this case \eqref{eq:clf}-\eqref{eq:cbf} hold trivially.

      \textbf{Case 4.} ($b_0\cdot b_1<0$) Assume that $b_i(x) > 0$ and $b_j(x) < 0$ for $i, j \in \{0,1\}$.  In this case, \eqref{eq:14} holds. It follows that if
        \begin{equation}
            u\in\left(-\frac{a_j}{b_j}, -\frac{a_i}{b_i}\right),
        \end{equation}
    then \eqref{eq:clf}-\eqref{eq:cbf} hold simultaneously. This is because $u<-\frac{a_i}{b_i}$ implies that
    \begin{equation}
        a_i+b_iu<a_i+b_i\left(-\frac{a_i}{b_i}\right)=0,
    \end{equation}
    and $u>-\frac{a_j}{b_j}$ implies that
    \begin{equation}
        a_j+b_ju<a_j+b_j\left(-\frac{a_j}{b_j}\right)=0.
    \end{equation}
    Hence, it is sufficient to show that when $b_0\cdot b_1<0$, both $k_{\rm l}$ and $k_{\rm m}$ lie in the interval $\left(-\frac{a_j}{b_j}, -\frac{a_i}{b_i}\right)$.

    When $b_i(x) > 0$ and $b_j(x) < 0$, we have $\phi_i < 0$, $\phi_i < -\frac{a_i}{b_i}$, $\phi_j > 0$, and $\phi_j > -\frac{a_j}{b_j}$, as shown in Fig. \ref{fig:kl}. Thus, we have 
    \begin{equation}
        \max\left\{\phi_i,-\frac{a_j}{b_j} \right\}\in \left[ -\frac{a_j}{b_j}, -\frac{a_i}{b_i}\right),
    \end{equation}
    and
    \begin{equation}
        \min\left\{\phi_j,-\frac{a_i}{b_i} \right\}\in \left( -\frac{a_j}{b_j}, -\frac{a_i}{b_i}\right].
    \end{equation}
    Note that $\lambda(\cdot)\in(0,1)$ and that $k_{\rm l}$ is the convex combination of $\max\big\{\phi_i,-\frac{a_j}{b_j} \big\}$ and $\min\big\{\phi_j,-\frac{a_i}{b_i} \big\}$. Hence, $k_{\rm l}(x)\in\big(-\frac{a_j}{b_j}, -\frac{a_i}{b_i}\big)$ when $b_0(x)\cdot b_1(x)<0$, which implies that \eqref{eq:clf}-\eqref{eq:cbf} hold simultaneously with $u=k_{\rm l}(x)$.

    Similarly, it follows from~\eqref{eq:km} that
\begin{equation}
k_{\rm m}(x) \le \max \left\{ \phi_i,\frac{1}{2}\left(-\frac{a_i}{b_i}-\frac{a_j}{b_j}\right)\right\} < -\frac{a_i}{b_i},
\end{equation}
and
\begin{equation}
k_{\rm m}(x) \ge \phi_j > -\frac{a_j}{b_j},
\end{equation}
which shows that, when $b_0(x)\cdot b_1(x) < 0$, \eqref{eq:clf}-\eqref{eq:cbf} hold simultaneously with $u = k_{\rm m}(x)$.
Thus, we conclude that \eqref{eq:clf}-\eqref{eq:cbf} hold simultaneously if $b_0\cdot b_1<0$.
    
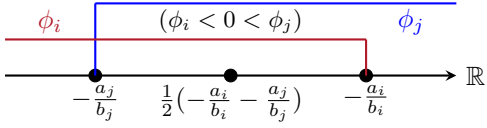
\begin{figure}[t]
\centering
\begin{tikzpicture}[>=stealth, scale=1.2]
\draw[thick,->] (-1,0) -- (4,0) node[right] {$\mathbb{R}$};
\coordinate (A) at (0,0);
\coordinate (B) at (3,0);
\coordinate (C) at (1.5,0);
\draw[fill] (A) circle (2pt) node[below] {$-\frac{a_j}{b_j}$};
\draw[fill] (B) circle (2pt) node[below] {$-\frac{a_i}{b_i}$};
\draw[fill] (C) circle (2pt) node[below] {$\frac{1}{2}(-\frac{a_i}{b_i}-\frac{a_j}{b_j})$};

\draw[thick,blue] (0,0) -- (0,0.8);
\draw[thick,blue] (0,0.8) -- (4,0.8);
\draw[thick,myred] (3,0) -- (3,0.4);
\draw[thick,myred] (3,0.4) -- (-1,0.4);

\node at (1.5,0.6) {\small$(\phi_i<0<\phi_j)$};
\node at (-0.5,0.6) {\color{myred}$\phi_i$};
\node at (3.5,0.6) {\color{blue}$\phi_j$};
\end{tikzpicture}
\caption{Illustration of \unboldmath$\phi_i$ and $\phi_j$ when $b_i > 0$ and $b_j < 0$ and the necessary condition \eqref{eq:14} is satisfied.}
\label{fig:kl}
\end{figure}

      \paragraph*{2) Continuity of $k_{\rm l}$ and $k_{\rm m}$} Next, we show that the two feedback laws $k_{\rm l}$ and $k_{\rm m}$ are continuous on $\mathbb{R}^n\backslash\{0\}$. Since $\phi_0$ and $\phi_1$ are continuous everywhere, $k_{\rm l}$ and $k_{\rm m}$ are also continuous in the ``interiors", i.e., when $b_0\cdot b_1>0$ or $b_0\cdot b_1<0$, as shown in Fig. \ref{fig:b}. The only possible points of discontinuity of $k_{\rm l}$ and $k_{\rm m}$ are those points on the switching surfaces, i.e., $b_0(x)=0$ or $b_1(x)=0$. So, we need to verify the continuity of the functions $k_{\rm l}$ and $k_{\rm m}$ at the points on the $b_0$ and $b_1$ axes. 
      
      We first verify the continuity at the ``origin" ($|b|=0$).
      In both \textbf{Case 1} ($b_0, b_1\ge 0, |b|\ne 0$) and \textbf{Case 2} ($b_0, b_1\le 0, |b|\ne 0$), if $b_0\to 0$ and $b_1\to 0$, then it follows from the continuity of $\phi_0$ and $\phi_1$ that $\phi_0\to 0$ and $\phi_1\to 0$, and thus, $\min\{\phi_0,\phi_1\}\to 0$ and $\max\{\phi_0,\phi_1\}\to 0$, which coincides with the function value $k_{\rm l}(x)=k_{\rm m}(x)=0$ when $|b|=0$. In \textbf{Case 4} ($b_i>0$ and $b_j<0$), if $b_i\to 0_+$ and $b_j\to 0_-$, then we have $-\frac{a_i}{b_i}\to +\infty$ and $-\frac{a_j}{b_j}\to -\infty$, which yields
      \begin{equation*}
      	\max\left\{\phi_i,-\frac{a_j}{b_j} \right\}\to \phi_i\quad  \text{and} \quad  \min\left\{\phi_j,-\frac{a_i}{b_i} \right\} \to \phi_j.
      \end{equation*}
      Thus, in this case $k_{\rm l}(x)=(1-\lambda)\phi_i+\lambda\phi_j\to 0$ as $b_i\to 0_+$ and $b_j\to 0_-$. Also, note that 
      \begin{equation}
      	\min\{0,\max\{0,s\}\}= 0,\quad\forall s\in\mathbb{R}.
      \end{equation}
	  Thus, $k_{\rm m}(x)\to 0$ as $b_i\to 0_+$ and $b_j\to 0_-$. We conclude that $k_{\rm l}$ and $k_{\rm m}$ are continuous on $\{x\in\mathbb{R}^n:|b|=0\}$. 

	Next, we verify the continuity of $k_{\rm l}$ and $k_{\rm m}$ on the set
    \begin{equation*}
        \{b_0=0,b_1\ne 0\}\cup\{b_0\ne 0,b_1=0\}.
    \end{equation*}
    In \textbf{Case~1} ($b_0, b_1 \ge 0$, $|b|\ne 0$), assume that $b_i > 0$ and $b_j = 0$. Then $\phi_i < 0$ and $\phi_j = 0$, and hence $\min\{\phi_0,\phi_1\} = \phi_i$.
    In \textbf{Case~2} ($b_0, b_1 \le 0$, $|b|\ne 0$), assume that $b_i = 0$ and $b_j < 0$. Then $\phi_i = 0$ and $\phi_j > 0$, and hence $\max\{\phi_0,\phi_1\} = \phi_j$, as illustrated in Fig.~\ref{fig:b}. To verify continuity at points on $\{b_0 = 0, b_1 \ne 0\} \cup \{b_0 \ne 0, b_1 = 0\}$, we consider \textbf{Case~4} ($b_i > 0$ and $b_j < 0$) and compute the limits of $k_{\rm l}(x)$ and $k_{\rm m}(x)$ in the following two situations\footnote{The remaining situation $b_i \to 0_+$, $b_j \to 0_-$ has been discussed earlier.}:
    \begin{enumerate}[a)]
    \item ($b_i \to 0_+$ and $b_j < 0$)
    In this case, $-\frac{a_i}{b_i} \to +\infty$ while $-\frac{a_j}{b_j}$ remains finite.
    Hence $\lambda(\cdot) \to 1$, and therefore
    \begin{equation}
        k_{\rm l}(x)\to \min\{\phi_j,+\infty\}=\phi_j
    \end{equation}
    Likewise, one gets
    \begin{equation}
		k_{\rm m}(x)\to \min\left\{\phi_j,\max \left\{ \phi_i,+\infty\right\}\right\}=\phi_j,
	\end{equation}
    \item ($b_i > 0$ and $b_j \to 0_-$)  
In this case, $-\frac{a_i}{b_i}$ is finite while $-\frac{a_j}{b_j} \to -\infty$.  
Hence $\lambda(\cdot) \to 0$, and thus
\begin{equation}
    k_{\rm l}(x) \to \max\{\phi_i, -\infty\} = \phi_i.
\end{equation}
Similarly, one gets
\begin{equation}
    k_{\rm m}(x) \to \min\!\left\{\phi_j,\; \max\{\phi_i, -\infty\}\right\} = \phi_i,
\end{equation}
because in this case, $\phi_i<0$ and $\phi_j>0$.
    \end{enumerate}
    In both cases, the limiting values of $k_{\rm l}$ and $k_{\rm m}$ agree with their function values on
$\{b_0=0, b_1\ne 0\} \cup \{b_0\ne 0, b_1=0\}$.
Therefore, we conclude that $k_{\rm l}$ and $k_{\rm m}$ are continuous on $\mathbb{R}^n\backslash\{0\}$, which completes the proof. \qed
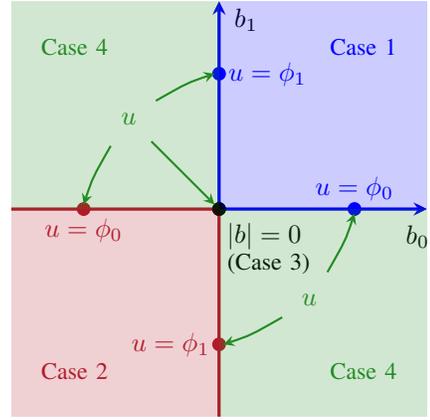
\begin{figure}[t]
\centering
\begin{tikzpicture}[>=stealth, scale=1.2]
\draw[thick,->] (-2.3,0) -- (2.3,0) node at (2.2,-0.3) {$b_0$};
\draw[thick,->] (0,-2.3) -- (0,2.3) node at (0.3, 2.1) {$b_1$};
\draw[blue, very thick,->] (0,0) -- (2.3,0);
\draw[blue, very thick,->] (0,0) -- (0,2.3);

\draw[myred, very thick,-] (0,0) -- (-2.3,0);
\draw[myred, very thick,-] (0,0) -- (0,-2.3);

\coordinate (A) at (0,0);
\draw[fill] (A) circle (2pt) node at (0.5,-0.3) {$|b|=0$};
\draw[fill] (A) circle (2pt) node at (0.55,-0.6) {\small (Case 3)};

\coordinate (B) at (1.5,0);
\draw[fill,blue] (B) circle (2pt) node[above] {$u=\phi_0$};

\coordinate (C) at (0,1.5);
\draw[fill,blue] (C) circle (2pt) node[right] {$u=\phi_1$};

\coordinate (D) at (-1.5,0);
\draw[fill,myred] (D) circle (2pt) node[below] {$u=\phi_0$};

\coordinate (E) at (0,-1.5);
\draw[fill,myred] (E) circle (2pt) node[left] {$u=\phi_1$};

\node at (1.6,1.8) {\color{blue}\small Case 1};
\node at (-1.6,-1.8) {\color{myred}\small Case 2};
\node at (-1.6,1.8) {\color{mygreen}\small Case 4};
\node at (1.6,-1.8) {\color{mygreen}\small Case 4};

\fill[blue, fill opacity=0.2] (0,0) rectangle (2.3,2.3);
\fill[myred, fill opacity=0.2] (0,0) rectangle (-2.3,-2.3);
\fill[mygreen, fill opacity=0.2] (0,0) rectangle (2.3,-2.3);
\fill[mygreen, fill opacity=0.2] (0,0) rectangle (-2.3,+2.3);

\draw[mygreen, ->, thick] (-0.7,1.2) .. controls (-0.35,1.4) .. (-0.05,1.5);
\draw[mygreen, ->, thick] (-1.2,0.7) .. controls (-1.4,0.35) .. (-1.5,0.05);
\draw[mygreen, ->, thick] (-0.75,0.75) .. controls (-0.35,0.35) .. (-0.05,0.05);
\node at (-1,1) {\color{mygreen}$u$};

\draw[mygreen, ->, thick] (0.7,-1.2) .. controls (0.35,-1.4) .. (0.05,-1.5);
\draw[mygreen, ->, thick] (1.2,-0.7) .. controls (1.4,-0.35) .. (1.5,-0.05);
\node at (1,-1) {\color{mygreen}$u$};
\end{tikzpicture}
\caption{Illustration of the continuity of \unboldmath$k_{\rm l}(x)$ and $k_{\rm m}(x)$.}
\label{fig:b}
\end{figure}

\section{Proof of Theorem~\ref{thm:3}}\label{appexdix:B}

    \paragraph*{1) Well-Definedness and Continuity} Both $k_{\rm l}$ and $k_{\rm m}$ are well defined and continuous following Theorem \ref{thm:2}. We only need to focus on the case where $b_0 b_1 < 0$  and \eqref{eq:14} does not hold, as shown in Fig. \ref{fig:3}. In this case, $u=-\frac{a_1}{b_1}$, and we first prove that it is impossible for $b_1$ to approach zero.

    Suppose that $b_1 > 0$ and $b_1 \to 0_+$ (i.e., $i = 1$ and $j = 0$). Then it is necessary that $-\frac{a_i}{b_i}=-\frac{a_1}{b_1}\to +\infty$, and hence it must also be the case that $-\frac{a_j}{b_j}<-\frac{a_i}{b_i}\to +\infty$, where \eqref{eq:14} is satisfied. On the other hand, suppose that $b_1 < 0$ and $b_1 \to 0_-$ (i.e., $i = 0$ and $j = 1$). Then it is necessary that $-\frac{a_j}{b_j}=-\frac{a_1}{b_1}\to -\infty$, and hence it must also be the case that $-\infty \leftarrow-\frac{a_j}{b_j}<-\frac{a_i}{b_i}$, where \eqref{eq:14} is also satisfied. Thus, when $b_0 b_1 < 0$  and \eqref{eq:14} does not hold, it is impossible for $b_1$ to approach zero, and hence $k_{\rm l}^*$ and $k_{\rm m}^*$ are well defined.

    Furthermore, the control law $u=-\frac{a_1}{b_1}$ is continuous when $b_1$ remains bounded away from zero. Also, from the analysis above, if $b_1\to 0$, then it is necessary that \eqref{eq:14} is satisfied (i.e., \textbf{Case 4} in the proof of Theorem \ref{thm:2}). So, we only need to verify the continuity of the control laws $k_{\rm l}^*$ and $k_{\rm m}^*$ when $b_0b_1<0$ and $-\frac{a_j}{b_j}=-\frac{a_i}{b_i}$. Let us consider \eqref{eq:kl} and $\eqref{eq:km}$ and suppose that $-\frac{a_j}{b_j}-\big(-\frac{a_i}{b_i}\big)\to 0_-$. It follows that
    \begin{equation}
    k_{\rm l}(x)=(1-\lambda)\cdot\left(-\frac{a_j}{b_j} \right) + \lambda\cdot\left(-\frac{a_i}{b_i} \right)\to -\dfrac{a_1}{b_1}
    \end{equation}
    and that  
    \begin{equation}
    k_{\rm m}(x)\to\min\left\{\phi_j,\max \left\{ \phi_i,-\frac{a_1}{b_1}\right\}\right\}=-\dfrac{a_1}{b_1}.
    \end{equation}
    Hence, both the control laws $u=k_{\rm l}^*(x)$ and $u=k_{\rm m}^*(x)$ are continuous on $\mathbb{R}^n\backslash\{0\}$.

    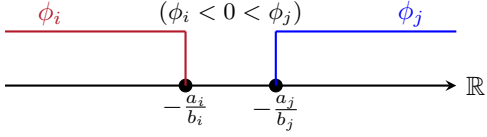
\begin{figure}[t]
\centering
\begin{tikzpicture}[>=stealth, scale=1.2]
\draw[thick,->] (-1,0) -- (4,0) node[right] {$\mathbb{R}$};
\coordinate (B) at (1,0);
\coordinate (A) at (2,0);
\coordinate (C) at (1.5,0);
\draw[fill] (A) circle (2pt) node[below] {$-\frac{a_j}{b_j}$};
\draw[fill] (B) circle (2pt) node[below] {$-\frac{a_i}{b_i}$};

\draw[thick,myred] (1,0) -- (1,0.6);
\draw[thick,myred] (1,0.6) -- (-1,0.6);
\draw[thick,blue] (2,0) -- (2,0.6);
\draw[thick,blue] (2,0.6) -- (4,0.6);

\node at (1.5,0.8) {\small$(\phi_i<0<\phi_j)$};
\node at (-0.5,0.8) {\color{myred}$\phi_i$};
\node at (3.5,0.8) {\color{blue}$\phi_j$};
\end{tikzpicture}
\caption{Illustration of \unboldmath$\phi_i$ and $\phi_j$ when $b_i > 0$ and $b_j < 0$ and the necessary condition \eqref{eq:14} is not satisfied.}
\label{fig:3}
\end{figure}

    \paragraph*{2) Safety} If $b_0b_1\ge 0$, or if $b_0b_1<0$ while condition \eqref{eq:14} is satisfied, it follows from Theorem \ref{thm:2} that $F_1:=a_1(x)+b_1(x)u<0$ with $u=k_{\rm l}^*(x)$ or $u=k_{\rm m}^*(x)$, which implies that
    \begin{equation}\label{eq:43}
        \dot{h}=L_fh(x)+L_gh(x)u>-\alpha_h(h(x)).
    \end{equation}
    If $b_0b_1<0$ and the condition \eqref{eq:14} does not hold, then $u=-\frac{a_1}{b_1}$ and $F_1:=a_1(x)+b_1(x)u=0$, which implies that 
    \begin{equation}\label{eq:44}
        \dot{h}=L_fh(x)+L_gh(x)u=-\alpha_h(h(x)).
    \end{equation}
    Along trajectories, \eqref{eq:43}-\eqref{eq:44} guarantees that the set $\operatorname{Int}(\mathcal{C})$ is forward invariant \cite{ames2017control}. In particular, \eqref{eq:44} indicates that the trajectory of the system moves toward the boundary $\partial \mathcal{C}$ of the safe set without crossing it.
    
    \paragraph*{3) Stability} To show asymptotic stability of the origin $0\in\operatorname{Int}(\mathcal{C})$, we need to prove that, in a neighborhood of $x=0$, if $b_0 b_1 < 0$ then condition~\eqref{eq:14} must hold. Once this is shown, it follows from Theorem~\ref{thm:2} that \eqref{eq:clf} is satisfied, and thus asymptotic stability of the origin is guaranteed.

    In a neighborhood of the origin, by the SCP, there exists a continuous control law $u_c$ such that $a_0+b_0u_c<0$. Since $u_c\to 0$ as $x\to 0$, we have $\frac{|a_0|}{|b_0|}\to 0$ as $x\to 0$. Moreover, note that $a_1(0)=-\alpha_h(h(0))<0$.

    Now, in a neighborhood of the origin, if $b_0b_1<0$, there are two possible cases:
\begin{enumerate}[(C1)]
\item ($b_1>0$) In this case, $i=1$ and $j=0$. Then $-\frac{a_i}{b_i}>-\frac{a_j}{b_j}\to 0$. 
\item ($b_1<0$) In this case, $i=0$ and $j=1$. Then $-\frac{a_j}{b_j}<-\frac{a_i}{b_i}\to 0$. 
\end{enumerate}
In either case, if $b_0b_1<0$, condition~\eqref{eq:14} necessarily holds in a neighborhood of the origin. Consequently, asymptotic stability of the origin follows.

\paragraph*{4) Continuity at $x=0$} As $x \to 0$, we have $|b| \to 0$ and hence $u \to 0$, which establishes continuity at the origin. \qed

\bibliographystyle{ieeetr} 
\bibliography{mybibfile}   
\end{document}